\newcounter{saveeqns}
\newcommand{\alpheqns}{\setcounter{saveeqns}{\value{equation}}%
\setcounter{equation}{0}%
\renewcommand{\theequation}{\mbox{S\arabic{equation}}}}
\newcommand{\reseteqns}{\setcounter{equation}{\value{saveeqns}}%
\renewcommand{\theequation}{\arabic{equation}}}
\newcounter{saveeqngs}
\newcommand{\alpheqngs}{\setcounter{saveeqngs}{\value{equation}}%
\setcounter{equation}{3}%
\renewcommand{\theequation}{\mbox{GS\arabic{equation}}}}
\newcommand{\reseteqngs}{\setcounter{equation}{\value{saveeqngs}}%
\renewcommand{\theequation}{\arabic{equation}}}
\newcounter{saveeqngss}
\newcommand{\alpheqngss}{\setcounter{saveeqngss}{\value{equation}}%
\setcounter{equation}{3}%
\renewcommand{\theequation}{\mbox{GS\arabic{equation}'}}}
\newcommand{\reseteqngss}{\setcounter{equation}{\value{saveeqngss}}%
\renewcommand{\theequation}{\arabic{equation}}}
\theoremstyle{plain}
\newtheorem{theo}{Theorem}
\newtheorem{pro}[theo]{Proposition}
\newtheorem{lemma}[theo]{Lemma}
\newtheorem*{problem}{Problem}
\title{Tsallis entropy and generalized Shannon additivity}
\author{Sonja J\"ackle and Karsten Keller}
\date{\today}
\begin{document}
\maketitle

\begin{abstract}
The Tsallis entropy given for a positive parameter $\alpha$ can be considered as a modification of the classical Shannon entropy. For the latter, corresponding to $\alpha=1$, there exist many axiomatic characterizations. One of them based on the well-known Khinchin-Shannon axioms has been simplified several times and adapted to Tsallis entropy, where the axiom of (generalized) Shannon additivity is playing a central role. The main aim of this paper is to discuss this axiom in the context of Tsallis entropy. We show that it is sufficient for characterizing Tsallis entropy with the exceptions of cases $\alpha=1,2$ discussed separately.
\end{abstract}

\section{Introduction}
\paragraph{Some history.} In 1988 Tsallis \cite{Tsallis1988} generalized the Boltzmann-Gibbs entropy 
\begin{align*}
S=-k_B\sum_{i=1}^{n}p_i\ln p_i,
\end{align*}
describing classical thermodynamical ensembles with microstates of probabilities $p_i$, by the entropy
\begin{align*}
S^\alpha=k_B\frac{1-\sum_{i=1}^{n}p_i^\alpha}{\alpha -1}
\end{align*}
for $0<\alpha\neq 1$ in the sense that $\lim_{\alpha\to 1}S^\alpha=S$. 
Here $k_B$ is the Boltzmann constant (which as only a multiplicative constant will not be considered in the following).
Many physicists argue that this was a breakthrough in thermodynamics
since the extension allows better describing systems out of equilibrium and systems with strong correlations between microstates, but there is also critizism on the application of Tsallis' concept (compare \cite{Cartwright2014, Tsallis2016}). In information theory pioneered by Shannon, the Boltzmann-Gibbs entropy is one of the central concept. We follow the usual practice to call it Shannon entropy. Also note that Tsallis' entropy concept coincides up to a constant with the Havrda-Charv\'{a}t entropy \cite{HavrdaCharvat1967} given in 1967 in an information theoretical context.

There have been given many axiomatic characterications of Tsallis' entropy starting from such of the classical Shannon entropy (see below). One important axiom called (generalized) Shannon additivity is extensively discussed and shown to be sufficient in some sense in this paper. 

\paragraph{Tsallis entropy.} In the following, let
$\triangle_n=\{(p_1,p_2,\ldots ,p_n)\in ({\mathbb R}^+)^n, \sum_{i=1}^n p_i=1\}$ for $n\in {\mathbb N}$ be the set of all $n$-dimensional stochastic vectors and
$\triangle=\bigcup_{n\in {\mathbb N}}\triangle_n$ be the set of all stochastic vectors, where ${\mathbb N}=\{1,2,3,\ldots\}$ and ${\mathbb R}^+$
are the sets of natural numbers and of nonnegative real numbers, respectively.
Given $\alpha>0$ with $\alpha\neq 1$, the Tsallis entropy of a stochastic vector $(p_1,p_2,\ldots ,p_n)$ of some dimension $n$ is defined by
\begin{align*}
H(p_1,p_2,\ldots,p_n) = \frac{1-\sum_{i=1}^{n}p_i^\alpha}{\alpha -1}.
\end{align*}
In the case $\alpha=1$, the value $H\left(p_1,\ldots,p_n\right)$ is not defined, but the limit of it as $\alpha$ approaches to $1$ is
\begin{align*}
H(p_1,p_2,\ldots,p_n) = -\sum_{i=1}^{n}p_i\ln p_i,
\end{align*}
which provides the classical Shannon entropy. In so far Tsallis entropy can be considered as a genera\-lization of the Shannon entropy and so it is not surprising that various
axiomatic characterizations of the latter one have been tried to generalize to the Tsallis entropy. 

\paragraph{Axiomatic characterizations.}
One line of characterizations mainly followed by Suyari \cite{Suyari2004} and discussed in this paper has its origin in the {\it Shannon-Khinchin axioms} of Shannon entropy (see \cite{Shannon1948} and \cite{Khinchin1957}). Note that other characterizations of Tsallis entropy are due to
dos Santos \cite{dosSantos1997}, Abe \cite{Abe2004} and Furuichi \cite{Furuichi2005}. For some general discussion of axiomatization of entropies see \cite{Csiszar2008}.

A map $H: \triangle\rightarrow {\mathbb R}^+$ is the Shannon entropy up to a multiplicative positive constant if
it satisfies the following axioms:
\alpheqns
\begin{eqnarray}
H\mbox{ is continuous on }\triangle_n\mbox{ for all }n\in {\mathbb N},\label{ks1}\\
H\left (\frac{1}{n},\frac{1}{n},\ldots ,\frac{1}{n}\right )\geq H(p_1,p_2,\ldots,p_n)\mbox{ for all }(p_1,p_2,\ldots,p_n)\in\triangle_n,\label{ks2}\\
H(p_1,p_2,\ldots,p_n,0)=H(p_1,p_2,\ldots,p_n)\mbox{ for all }(p_1,p_2,\ldots,p_n)\in\triangle,\label{ks3}\\
H(p_{1,1},...,p_{1,m_1},p_{2,1},...,p_{2,m_2},...,p_{n,1},...,p_{n,m_n})= H(p_1,...,p_n)+\sum\limits_{i=1}^{n}p_i H\left(\frac{p_{i,1}}{p_i},...,\frac{p_{i,m_i}}{p_i}\right)\nonumber\\
\mbox{for all }(p_{1,1},...,p_{1,m_1},p_{2,1},...,p_{2,m_2},...,p_{n,1},...,p_{n,m_n})\in \triangle;\ n,m_1,\ldots ,m_n\in {\mathbb N}\nonumber\label{ks4}\\
\mbox{and }p_i=\sum\limits_{j=1}^{m_i}p_{i,j};\ j=1,...,m_i.
\end{eqnarray}
\reseteqns
Axiom \eqref{ks4} called {\em Shannon additivity} is playing a key role in the characterization of the Shannon entropy and an interesting result given by Suyari \cite{Suyari2004} says that its generalization
\alpheqngs
\begin{eqnarray}\label{ks4v}
H(p_{1,1},...,p_{1,m_1},p_{2,1},...,p_{2,m_2},...,p_{n,1},...,p_{n,m_n})= H(p_1,...,p_n)+\sum\limits_{i=1}^{n}p_i^\alpha H\left(\frac{p_{i,1}}{p_i},...,\frac{p_{i,m_i}}{p_i}\right)\nonumber\\
\mbox{for all }(p_{1,1},...,p_{1,m_1},p_{2,1},...,p_{2,m_2},...,p_{n,1},...,p_{n,m_n})\in \triangle;\ n,m_1,\ldots ,m_n\in {\mathbb N}\hspace*{3mm}\nonumber\\
\mbox{and }p_i=\sum\limits_{j=1}^{m_i}p_{i,j};\ j=1,...,m_i\hspace*{3mm}
\end{eqnarray}
\reseteqngs
for $\alpha\neq 1$ provides the Tsallis entropy for this $\alpha$.

More precisely, if $H: \triangle\rightarrow {\mathbb R}^+$ satisfies \eqref{ks1}, \eqref{ks2}, \eqref{ks3} and \eqref{ks4v}, then $c(\alpha)H$ is the Tsallis entropy for some positive constant $c(\alpha)$. The full result of Suyari, which was slightly corrected by Ili\'{c} et al. \cite{IlicEtAl2013} includes a characterization of the map $\alpha\rightarrow c(\alpha)$ under the assumption that $H$ also depends continuously on $\alpha\in {\mathbb R}^+\setminus\{0\}$. We do not discuss this characterization, but we note here that the results below also provide an immediate simplification of the whole result of Suyari and Ili\'{c} et al.

\paragraph{The main result.} In this paper, we study the role of generalized Shannon additivity in characterizing Tsallis entropy, where for $\alpha\in {\mathbb R}^+\setminus\{0\}$ and $H: \triangle\rightarrow {\mathbb R}$ we also consider the slightly relaxed property that
\alpheqngss
\begin{align}\label{qadd}
H(p_1,...,p_{j-1},p_j,p_{j+1},p_{j+2},\ldots ,p_n)=H(p_1,...,p_{j-1},p_j+p_{j+1},p_{j+2},\ldots ,p_n)\nonumber\\
+\,(p_j+p_{j+1})^\alpha H\left(\frac{p_j}{p_j+p_{j+1}},\frac{p_{j+1}}{p_j+p_{j+1}}\right)\hspace{-0.5cm}\nonumber\\
\mbox{for all }(p_1,p_2,\ldots, p_n\in\triangle);\, n\in {\mathbb N}
;\, j=1,2,\ldots, n-1.
\hspace*{-0.25cm}
\end{align}
\reseteqngss
It turns out that this property basically is enough for characterizing the Tsallis entropy for $\alpha\in {\mathbb R}^+\setminus\{0,1,2\}$ and with a further weak assumption in the cases $\alpha=1,2$. As already mentioned, the statement (iii) for $\alpha=1$ is an immediate consequence of a characterization of Shannon entropy by Diderrich \cite{Diderrich1975} simplifying an axiomatization given by Faddeev \cite{Faddeev1956} (see below).
\begin{theo}\label{mainth}
Let $H: \triangle\rightarrow {\mathbb R}$ be given with \eqref{ks4v}, or a bit weaker \eqref{qadd}, for $\alpha\in {\mathbb R}^+\setminus\{0\}$. Then the following holds:
\begin{enumerate}
\item[(i)] If $\alpha\neq 1,2$, then
\begin{align}\label{maini}
H(p_1,p_2,\ldots ,p_n)=H\left(\frac{1}{2},\frac{1}{2}\right)\frac{1-\sum_{i=1}^n p_i^{\alpha}}{1-2^{1-\alpha}}\mbox{ for all }(p_1,p_2,\ldots ,p_n)\in\triangle.
\end{align}
\item[(ii)] If $\alpha=2$, then the following statements are equivalent:
\begin{enumerate}
\item It holds
\begin{align*}
H(p_1,p_2,\ldots ,p_n)=2H\left(\frac{1}{2},\frac{1}{2}\right)\left(1-\sum_{i=1}^n p_i^2\right)\mbox{ for all }(p_1,p_2,\ldots ,p_n)\in\triangle,
\end{align*}
\item $H$ is bounded on $\triangle_2$,
\item $H$ is continuous on $\triangle_2$,
\item $H$ is symmetric on $\triangle_2$,
\item $H$ does not change the signum on $\triangle_2$.
\end{enumerate}
\item[(iii)] If $\alpha=1$, then the following statements are equivalent:
\begin{enumerate}
\item It holds
\begin{align*}
H(p_1,p_2,\ldots ,p_n)=-H\left(\frac{1}{2},\frac{1}{2}\right)\left(\sum_{i=1}^n p_i\log_2 p_i\right)\mbox{ for all }(p_1,p_2,\ldots ,p_n)\in\triangle,
\end{align*}
\item
$H$ is bounded on $\triangle_2$.
\end{enumerate}
\end{enumerate}
\end{theo}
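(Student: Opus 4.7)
My first move is to collapse the problem to the two-variable function $h(p):=H(p,1-p)$ on $[0,1]$. Iterating \eqref{qadd} by repeatedly combining the last two entries, any value of $H$ on $\triangle_n$ can be written as
\[ H(p_1,\dots,p_n)=H(1)+\sum_{k=1}^{n-1}T_k^{\alpha}\,h(p_k/T_k),\quad T_k:=p_k+\dots+p_n, \]
and the $n=2$ case forces $H(1)=0$. Hence $H$ is entirely determined by $h$, and conversely any $h$ extended by this formula satisfies \eqref{qadd}; the theorem thus reduces to classifying the admissible $h$. Applying \eqref{qadd} to a $3$-vector $(x,y,1-x-y)$ in the two possible ways---combining $(1,2)$ versus $(2,3)$---yields the fundamental equation
\[ h(x+y)+(x+y)^{\alpha}h\!\left(\tfrac{x}{x+y}\right)=h(x)+(1-x)^{\alpha}h\!\left(\tfrac{y}{1-x}\right),\quad x,y\geq 0,\ x+y\leq 1, \]
and the substitutions $x=0$ and $y=0$ give the boundary values $h(0)=h(1)=0$.

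\textbf{Part (i), $\alpha\notin\{1,2\}$.} The Tsallis candidate $h_T(p):=c\bigl(1-p^{\alpha}-(1-p)^{\alpha}\bigr)$ with $c:=H(1/2,1/2)/(1-2^{1-\alpha})$, well-defined since $\alpha\neq 1$, solves the fundamental equation and matches $h(1/2)$. Hence $\psi:=h-h_T$ satisfies the same equation, vanishes at $0$, $1/2$, and $1$, and I must show $\psi\equiv 0$ by purely algebraic means. The plan is to evaluate the fundamental equation at several specialised triples---notably $y=x$, $y=(1-x)/2$, $y=1-2x$, and the ``swapped'' triple $(y,x,1-x-y)$---yielding linear relations among $\psi(x)$, $\psi(2x)$, $\psi((1+x)/2)$, $\psi(2x/(1+x))$, $\psi(x/(1-x))$ and $\psi(1-x)$. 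A suitable linear combination should yield an identity of the form $P(\alpha)\,\psi(x)=0$, in which $P$ is an explicit rational function whose only positive zeros are $\alpha=1$ and $\alpha=2$; this is exactly where the two exceptional exponents enter. Feeding $\psi\equiv 0$ back into the reduction formula gives \eqref{maini}. The main obstacle lies precisely here: pinpointing the linear combination whose coefficient polynomial isolates $\alpha\in\{1,2\}$ without invoking any regularity of $h$.

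\textbf{Part (ii), $\alpha=2$.} For $\alpha=2$ the coefficient $P(\alpha)$ vanishes and the fundamental equation admits non-trivial non-Tsallis solutions. With $h_T(p)=2cp(1-p)$ and $\psi:=h-h_T$, the plan is to reduce the fundamental equation to a Cauchy equation $f(x+y)=f(x)+f(y)$ on an appropriate half-line via a change of variables such as $\psi(p)=p(1-p)g(\cdot)$ or a logarithmic transform. The non-trivial Hamel-basis solutions of the Cauchy equation are unbounded on every interval, discontinuous everywhere, sign-changing on every interval, and fail to be symmetric about $1/2$; consequently each of the conditions (b)--(e) forces $\psi\equiv 0$, establishing (b)--(e)$\Rightarrow$(a). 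The reverse implications are immediate because $2H(1/2,1/2)(1-\sum p_i^2)$ is a continuous, symmetric, non-negative polynomial bounded on $\triangle_2$.

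\textbf{Part (iii), $\alpha=1$.} The fundamental equation at $\alpha=1$ is precisely the classical fundamental equation of information theory, and Diderrich's theorem \cite{Diderrich1975}---a simplification of Faddeev's axiomatization---asserts that every bounded solution on $[0,1]$ is a scalar multiple of $-p\log_2 p-(1-p)\log_2(1-p)$. Applying this to $h$ and substituting into the reduction formula yields (a); the converse (a)$\Rightarrow$(b) is immediate.
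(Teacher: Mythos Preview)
Your reduction to the single function $h(p)=H(p,1-p)$ and your derivation of the three--variable fundamental equation are correct and match the paper in spirit, and your treatment of~(iii) via Diderrich is exactly what the paper does. However, parts~(i) and~(ii) of the proposal are plans, not proofs, and the key steps you yourself flag as ``obstacles'' are precisely the content of the theorem.

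\textbf{Part (i).} You propose to find a linear combination of several specialisations of the fundamental equation that yields $P(\alpha)\psi(x)=0$ with $P$ vanishing only at $\alpha=1,2$, but you do not exhibit it; and the substitutions you list ($y=x$, $y=(1-x)/2$, $y=1-2x$, the swap $(y,x)$) do not obviously close up to such an identity. The paper bypasses this search with a single clever move at the level of a \emph{four}--vector: decompose $H(\tfrac{p_1}{2},\tfrac{p_1}{2},\tfrac{p_2}{2},\tfrac{p_2}{2})$ and $H(\tfrac{p_1}{2},\tfrac{p_2}{2},\tfrac{p_1}{2},\tfrac{p_2}{2})$ via \eqref{qadd} in two different orders each. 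This yields the single relation
\[
(1-3\cdot 2^{-\alpha})\,H(p_1,p_2)+2^{-\alpha}H(p_2,p_1)=H\!\left(\tfrac{1}{2},\tfrac{1}{2}\right)\bigl(1-p_1^{\alpha}-p_2^{\alpha}\bigr),
\]
and swapping $p_1\leftrightarrow p_2$ gives a second. Solving the resulting $2\times 2$ linear system for $H(p_1,p_2)$ produces a coefficient that factors as $(1-2^{2-\alpha})(1-2^{1-\alpha})$, which is exactly the ``$P(\alpha)$'' you are looking for; it vanishes precisely at $\alpha=1,2$. This halving trick is not among your listed substitutions and is the missing idea.

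\textbf{Part (ii).} Your proposed route---transform $\psi$ into a solution of an additive Cauchy equation and invoke the standard dichotomy for Hamel solutions---is not carried out, and I do not see that the $\alpha=2$ fundamental equation linearises to $f(x+y)=f(x)+f(y)$ under any substitution of the kind you suggest. The paper does something quite different: from the relation above at $\alpha=2$ one has $H(p,1-p)+H(1-p,p)=4H(\tfrac12,\tfrac12)(1-p^2-(1-p)^2)$, so (a) is equivalent to symmetry, i.e.\ to $D(p):=|H(p,1-p)-H(1-p,p)|\equiv 0$. A separate application of \eqref{qadd} to $(1-p,2p-1,1-p)$ gives the recursion $D(p)=p^{2}D(f(p))$ with $f(p)=\max\{(1-p)/p,\,1-(1-p)/p\}$ on $[\tfrac12,1]$. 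Iterating and using that every orbit of $f$ eventually lands in $[\tfrac12,\tfrac23]$, one gets $D(p)=\bigl(\prod_{k=1}^{n}f^{\circ k}(p)\bigr)^{2}D(f^{\circ n}(p))\to 0$ under boundedness, which handles (b) and (d); (c)$\Rightarrow$(b) is compactness, and (e) gives boundedness of $D$ via the displayed sum formula. This dynamical argument is the actual mechanism, and your Cauchy--equation sketch does not substitute for it.
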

Note that statement (iii) is given here only for reasons of completeness. It follows from a result of Diderrich \cite{Diderrich1975}.

The paper is organized as follows. Section 2 is devoted to the proof of the main result. It will turn out that most of the substantial work is related to stochastic vectors contained in $\triangle_2\cup\triangle_3$ and that the generalized Shannon additivity performs as a bridge to stochastic vectors longer than $2$ or $3$. Section 3 completes the discussion. In particular, the Tsallis entropy for $\alpha=1,2$ on rational vectors is discussed and an open problem is formulated.

\section{Proof of the main result}
We start with investigating the relationship of $H(p_1,p_2)$ and $H(p_2,p_1)$ for $(p_1,p_2)\in\triangle_2$.
\begin{lemma}\label{prelem}
Let $\alpha\in {\mathbb R}^+\setminus\{0\}$ and $H: \triangle\rightarrow {\mathbb R}$ satisfy \eqref{qadd}. Then for all
$(p_1,p_2)\in\triangle_2$ it follows
\begin{align}\label{prefo}
(1-3\cdot 2^{-\alpha})H(p_1,p_2)+2^{-\alpha}H(p_2,p_1)
=H\left(\frac{1}{2},\frac{1}{2}\right)(1-p_1^{\alpha}-p_2^{\alpha}),
\end{align}
in particular for $\alpha = 1$
\begin{align}\label{prefo2}
H(p_1,p_2)=H(p_2,p_1)
\end{align}
and for $\alpha = 2$
\begin{align}\label{prefo3}
H(p_1,p_2)+H(p_2,p_1)
=4H\left(\frac{1}{2},\frac{1}{2}\right)(1-p_1^2-p_2^2).
\end{align}
Moreover it holds
\begin{align}\label{prefo4}
H(1)=0.
\end{align}
\end{lemma}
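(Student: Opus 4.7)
The plan is to establish \eqref{prefo} by applying \eqref{qadd} to two carefully chosen 4-vectors in $\triangle_4$ and computing the entropy in two different orders of aggregation on each. The normalization \eqref{prefo4} is essentially free: applying \eqref{qadd} to any $(p_1, p_2) \in \triangle_2$ with $j=1$ gives $H(p_1, p_2) = H(1) + 1^\alpha H(p_1, p_2)$, forcing $H(1) = 0$.

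For \eqref{prefo}, I first consider the symmetric 4-vector $(p_1/2, p_1/2, p_2/2, p_2/2)$. Aggregating the first two components and then the resulting pair $(p_2/2, p_2/2)$ (via \eqref{qadd} with $j=1$ then $j=2$ on the resulting 3-vector) produces $H(p_1, p_2) + (p_1^\alpha + p_2^\alpha) H(1/2, 1/2)$. Aggregating the central pair instead (via $j=2$) produces $H(p_1/2, 1/2, p_2/2) + 2^{-\alpha} H(p_1, p_2)$. Equating the two yields the auxiliary identity
\begin{align*}
H(p_1/2, 1/2, p_2/2) = (1 - 2^{-\alpha}) H(p_1, p_2) + (p_1^\alpha + p_2^\alpha) H(1/2, 1/2).
\end{align*}

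I then apply the same strategy to the interleaved 4-vector $(p_1/2, p_2/2, p_1/2, p_2/2)$. Aggregating $j=1$ first and then $j=2$ of the resulting 3-vector gives $H(1/2, 1/2) + 2 \cdot 2^{-\alpha} H(p_1, p_2)$. Aggregating $j=2$ first instead gives $H(p_1/2, 1/2, p_2/2) + 2^{-\alpha} H(p_2, p_1)$, where the correction entropy is now evaluated at $(p_2, p_1)$ rather than $(p_1, p_2)$ because \eqref{qadd} preserves the order of the two merged components and at positions $2,3$ they appear in the order $p_2/2, p_1/2$. Substituting the auxiliary identity into this second expression and equating with the first produces \eqref{prefo} after elementary rearrangement.

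Finally, \eqref{prefo2} and \eqref{prefo3} follow by specializing $\alpha=1$ and $\alpha=2$ in \eqref{prefo}: for $\alpha=1$ the right-hand side vanishes since $p_1+p_2=1$ and the left-hand side collapses to $\tfrac{1}{2}(H(p_2,p_1) - H(p_1,p_2))$, giving symmetry; for $\alpha=2$ both coefficients on the left equal $\tfrac{1}{4}$, and multiplying by $4$ yields the stated form. The only subtle point in the plan is the orientation bookkeeping just noted: it is precisely the swap of $p_1$ and $p_2$ in the middle correction of the interleaved 4-vector that introduces $H(p_2, p_1)$ into \eqref{prefo} with coefficient $2^{-\alpha}$ while $H(p_1, p_2)$ retains the coefficient $(1 - 3 \cdot 2^{-\alpha})$. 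No further obstacle arises.
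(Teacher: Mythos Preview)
Your approach is essentially identical to the paper's: both proofs use the same two 4-vectors $(p_1/2,\,p_1/2,\,p_2/2,\,p_2/2)$ and $(p_1/2,\,p_2/2,\,p_1/2,\,p_2/2)$, decompose each in two ways via \eqref{qadd}, and eliminate the common auxiliary term $H(p_1/2,\,1/2,\,p_2/2)$ to obtain \eqref{prefo}. The only detail you omit is the boundary case $p_1 p_2=0$: your first decomposition of $(p_1/2,\,p_1/2,\,p_2/2,\,p_2/2)$ merges a pair summing to $p_1$ (resp.\ $p_2$), so \eqref{qadd} is not applicable when that sum vanishes; the paper covers this separately by deriving $H(1,0)=H(0,1)=0$ from two expansions of $H(1/2,\,1/2,\,0)$.
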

\begin{proof}
First of all, note that \eqref{prefo4} is an immediate consequence of \eqref{qadd} implying
\begin{align*}
H(1,0)=H(1)+1^\alpha H(1,0).\hspace*{4cm}
\end{align*}
Further, two different applications of \eqref{qadd} to $H\left (\frac{1}{2},\frac{1}{2},0\right)$ provide
\begin{align*}
H\left (\frac{1}{2},\frac{1}{2}\right)+\left (\frac{1}{2}\right )^{\hspace{-1mm}\alpha}H(1,0)=H\left (\frac{1}{2},\frac{1}{2},0\right)=H(1,0)+1^\alpha H\left(\frac{1}{2},\frac{1}{2}\right).
\end{align*}
Therefore $H(1,0)=0$, and since similarly one gets $H(0,1)=0$, in the following we can assume that $p_1,p_2\neq 0$.

Applying \eqref{qadd} three times, one obtains
\begin{align}\label{f1}
H(p_1,p_2)+(p_1^{\alpha}+p_2^{\alpha})H\left(\frac{1}{2},\frac{1}{2}\right)=H\left(\frac{p_1}{2},\frac{p_1}{2},\frac{p_2}{2},\frac{p_2}{2}\right)
=H\left(\frac{p_1}{2},\frac{1}{2},\frac{p_2}{2}\right)+2^{-\alpha}H(p_1,p_2)
\end{align}
and in the same way
\begin{align}\label{f2}
H\left(\frac{p_1}{2},\frac{1}{2},\frac{p_2}{2}\right)+2^{-\alpha}H(p_2,p_1)=H\left(\frac{p_1}{2},\frac{p_2}{2},\frac{p_1}{2},\frac{p_2}{2}\right)
=H\left(\frac{1}{2},\frac{1}{2}\right)+2^{1-\alpha}H(p_1,p_2).
\end{align}
Transforming \eqref{f2} to the term $H\left(\frac{p_1}{2},\frac{1}{2},\frac{p_2}{2}\right)$ and then substituting this term in \eqref{f1}, provides
\begin{align*}
H(p_1,p_2)+(p_1^{\alpha}+p_2^{\alpha})H\left(\frac{1}{2},\frac{1}{2}\right )
=H\left (\frac{1}{2},\frac{1}{2}\right )+3\cdot 2^{-\alpha}H(p_1,p_2)-2^{-\alpha}H(p_2,p_1),
\end{align*}
which is equal to \eqref{prefo}. Statements \eqref{prefo2} and \eqref{prefo3} follow immediately from equation \eqref{prefo}.
\end{proof}
In the case $\alpha=1$ condition \eqref{qadd} implies that the order of components of a stochastic vector does not make a difference for $H$:
\begin{lemma}\label{permut}
Let $H: \triangle\rightarrow {\mathbb R}$ satisfy \eqref{qadd} for $\alpha=1$. Then $H$ is permutation-invariant, meaning that $H(p_1,p_2\ldots ,p_n)=H(p_{\pi(1)},p_{\pi(2)}\ldots ,p_{\pi(n)})$ for each $(p_1,p_2,\ldots ,p_n)\in\triangle;\, n\in {\mathbb N}$ and each permutation $\pi$ of $\{1,2,\ldots ,n\}$.
\end{lemma}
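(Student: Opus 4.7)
The plan is to reduce to adjacent transpositions and then invoke the symmetry on $\triangle_2$ that we already have from Lemma~\ref{prelem}. Since the adjacent transpositions $(j,j+1)$ generate the full symmetric group $S_n$, it suffices to prove $H(p_1,\ldots,p_j,p_{j+1},\ldots,p_n)=H(p_1,\ldots,p_{j+1},p_j,\ldots,p_n)$ for arbitrary $1\leq j\leq n-1$.

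The key move is to apply \eqref{qadd} to each side simultaneously, collapsing positions $j$ and $j+1$. On the left, \eqref{qadd} gives
\begin{align*}
H(p_1,\ldots,p_j,p_{j+1},\ldots,p_n) = H(p_1,\ldots,p_{j-1},p_j+p_{j+1},p_{j+2},\ldots,p_n)\\
+ (p_j+p_{j+1}) H\!\left(\tfrac{p_j}{p_j+p_{j+1}},\tfrac{p_{j+1}}{p_j+p_{j+1}}\right),
\end{align*}
while applying \eqref{qadd} to the swapped vector yields the same $(n-1)$-dimensional term $H(p_1,\ldots,p_{j-1},p_j+p_{j+1},p_{j+2},\ldots,p_n)$ plus $(p_j+p_{j+1})H\!\left(\tfrac{p_{j+1}}{p_j+p_{j+1}},\tfrac{p_j}{p_j+p_{j+1}}\right)$. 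By \eqref{prefo2} of Lemma~\ref{prelem}, the two binary entropies coincide, so the two expressions are equal.

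The only loose end is the degenerate case $p_j=p_{j+1}=0$, for which the quotients $p_j/(p_j+p_{j+1})$ are undefined; this is handled separately by invoking \eqref{qadd} with some adjacent nonzero pair to first reduce the problem to the case of strictly positive entries (using $H(\ldots,0,0,\ldots)=H(\ldots,0,\ldots)$, which itself follows from a single application of \eqref{qadd}, recalling $H(1,0)=0$ from the proof of Lemma~\ref{prelem}). I do not expect any real obstacle: once symmetry on $\triangle_2$ is in hand, the generalized Shannon additivity transports it directly to arbitrary dimension, and the argument is essentially a two-line computation.
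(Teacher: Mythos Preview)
Your proof is correct and follows essentially the same approach as the paper: reduce to adjacent transpositions and then use \eqref{qadd} together with the symmetry \eqref{prefo2} on $\triangle_2$ from Lemma~\ref{prelem}. One small simplification: in the degenerate case $p_j=p_{j+1}=0$ the swapped vector is literally the same vector, so equality is immediate and no further reduction is needed.
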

\begin{proof}
It suffices to show that
\begin{align*}
H(p_1,...,p_{j-1},p_j,p_{j+1},p_{j+2},\ldots ,p_n)=H(p_1,...,p_{j-1},p_{j+1},p_j,p_{j+2},\ldots ,p_n)\\
\mbox{for all }(p_1,p_2,\ldots, p_n)\in\triangle;\, n\in {\mathbb N}
;\, j=1,2,\ldots, n-1
\end{align*}
For $n<3$ this has been shown in Lemma \ref{prelem} (see \eqref{prefo2}), for $n\geq 3$ it follows directly from \eqref{qadd} and from Lemma \ref{prelem}.
\end{proof}
The following lemma provides the substantial part of the proof of Theorem \ref{mainth}.
\begin{lemma}\label{beginlem}
For $H: \triangle\rightarrow {\mathbb R}$ satisfying \eqref{qadd} with $\alpha\in {\mathbb R}^+\setminus\{0,1\}$, the following holds:
\begin{enumerate}
\item[(i)] If $\alpha\neq 2$, then
\begin{equation*}
H(p_1,p_2)=H\left(\frac{1}{2},\frac{1}{2}\right)\frac{1-p_1^{\alpha}-p_2^{\alpha}}{1-2^{1-\alpha}}\mbox{ for all }(p_1,p_2)\in \triangle_2.
\end{equation*}
\item[(ii)] If $\alpha=2$, then the following statements are equivalent:
\begin{enumerate}
\item[(a)] It holds
\begin{align*}
H(p_1,p_2)
=2H\left(\frac{1}{2},\frac{1}{2}\right)(1-p_1^2-p_2^2)\mbox{ for all }(p_1,p_2)\in \triangle_2,
\end{align*}
\item[(b)] $H$ is symmetric on $\triangle_2$, meaning that $H(p_1,p_2)=H(p_2,p_1)$ for all $(p_1,p_2)\in \triangle_2$,
\item[(c)] $H$ is continuous on $\triangle_2$,
\item[(d)] $H$ is bounded on $\triangle_2$,
\item[(e)] $H$ is nonnegative or nonpositive on $\triangle_2$.
\end{enumerate}
\end{enumerate}
\end{lemma}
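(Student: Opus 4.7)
My plan is to dispatch part (i) via a quick symmetry argument applied to identity \eqref{prefo} of Lemma \ref{prelem}, and to devote the bulk of the work to part (ii), where the substantive implication is (d) $\Rightarrow$ (a); every other implication in (ii) reduces to a direct substitution into \eqref{prefo3} or to compactness.

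For part (i), I would write \eqref{prefo} twice, once as stated and once with $p_1,p_2$ interchanged (the right-hand side is symmetric). Subtracting gives $(1-2^{2-\alpha})(H(p_1,p_2)-H(p_2,p_1))=0$, so $\alpha\neq 2$ forces $H(p_1,p_2)=H(p_2,p_1)$. Substituting symmetry back into \eqref{prefo} yields $(1-2^{1-\alpha})H(p_1,p_2)=H(\frac{1}{2},\frac{1}{2})(1-p_1^\alpha-p_2^\alpha)$, and $\alpha\neq 1$ allows division. For the easy directions of part (ii), (a) implies (b)--(e) because the explicit formula $H(p_1,p_2)=4H(\frac{1}{2},\frac{1}{2})p_1p_2$ is symmetric, continuous, bounded, and sign-constant. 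In the reverse direction, (b) $\Rightarrow$ (a) follows by plugging symmetry into \eqref{prefo3}; (c) $\Rightarrow$ (d) is compactness of $\triangle_2$; and (e) $\Rightarrow$ (d) follows from \eqref{prefo3} with sign-definiteness (if $H\geq 0$ then $c:=H(\frac{1}{2},\frac{1}{2})\geq 0$ and $0\leq H(p_1,p_2)\leq H(p_1,p_2)+H(p_2,p_1)=8cp_1p_2\leq 2c$; similarly for $H\leq 0$).

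The main work is (d) $\Rightarrow$ (a). I would set $f(p):=H(p,1-p)$ and introduce the defect $d(p):=f(p)-4cp(1-p)$ on $[0,1]$. From \eqref{prefo3} one reads off $d(1-p)=-d(p)$, and the boundary identities in Lemma \ref{prelem} give $d(0)=d(\frac{1}{2})=d(1)=0$; the goal is $d\equiv 0$. Hypothesis (d) yields a finite $M:=\sup_{[0,1]}|d|$. The key derivation is to apply \eqref{qadd} in two different groupings to three length-three vectors, namely $(\frac{p}{2},\frac{p}{2},1-p)$, $(1-p,\frac{p}{2},\frac{p}{2})$, and $(\frac{p}{2},1-p,\frac{p}{2})$. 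The first two, after subtraction and cancellation of the quadratic $c$-contributions, produce $d(p)=d(\frac{p}{2})+(1-\frac{p}{2})^2 d(\frac{p}{2-p})$; the third alone produces $d(\frac{p}{2})=(1-\frac{p}{2})^2 d(\frac{p}{2-p})$. Subtracting these two relations eliminates the awkward $d(\frac{p}{2-p})$ term and leaves the clean scaling $d(p)=2\,d(\frac{p}{2})$ on $(0,1)$. Applying scaling to $1-p$ and invoking antisymmetry $d(1-p)=-d(p)$ then yields periodicity: $d(p+\frac{1}{2})=d(p)$ for $p\in(0,\frac{1}{2})$.

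With scaling and periodicity in hand, the endgame is a sup-halving contraction: for $q\in(0,\frac{1}{2})$, scaling gives $|d(q)|=|d(2q)|/2\leq M/2$; for $q\in(\frac{1}{2},1)$, periodicity transfers this to $|d(q)|=|d(q-\frac{1}{2})|\leq M/2$; and $d$ vanishes on $\{0,\frac{1}{2},1\}$. Hence $M\leq M/2$, forcing $M=0$ and therefore $d\equiv 0$, which is (a). The main obstacle I anticipate is the algebraic bookkeeping needed to isolate the scaling relation $d(p)=2d(p/2)$: the three length-three vectors and their two groupings must be chosen so that the auxiliary $d(\frac{p}{2-p})$ term cancels cleanly between the two derived identities. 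Once this relation is obtained, the periodicity and contraction steps go through with only boundedness as input, which is exactly why (d) is the efficient hypothesis to aim for from (c) and (e).
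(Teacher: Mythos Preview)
Your argument is correct. Part (i) and the easy implications of (ii) match the paper's proof (your subtraction for (i) is in fact a touch cleaner than the paper's elimination). For the hard implication (d) $\Rightarrow$ (a) in (ii), however, you take a genuinely different route. The paper works with the single test vector $(1-p,2p-1,1-p)$ to obtain the recursion $D(p)=p^2\,D(f(p))$ for $D(p)=|H(p,1-p)-H(1-p,p)|$ and the interval map $f(p)=\max\{\tfrac{1-p}{p},\,1-\tfrac{1-p}{p}\}$ on $[\tfrac12,1]$, and then argues dynamically that every orbit of $f$ visits $[\tfrac12,\tfrac23]$ infinitely often, so the accumulated product of factors is driven to zero against a bounded $D$. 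Your approach instead combines the identities coming from $(\tfrac{p}{2},\tfrac{p}{2},1-p)$ and $(\tfrac{p}{2},1-p,\tfrac{p}{2})$ to eliminate the cross term $d(\tfrac{p}{2-p})$ and extract the exact dyadic law $d(p)=2\,d(\tfrac{p}{2})$; antisymmetry then yields $\tfrac12$-periodicity and a one-line sup-halving. This trades a dynamical analysis for a bit more algebra up front and gives a shorter, more self-contained endgame. One small wording point: your relation (I$'$), namely $d(p)=d(\tfrac{p}{2})+(1-\tfrac{p}{2})^2 d(\tfrac{p}{2-p})$, already follows from the \emph{first} vector alone after cancelling the $c$-terms; the second vector $(1-p,\tfrac{p}{2},\tfrac{p}{2})$ reproduces the same identity via antisymmetry rather than adding new information, so the phrase ``the first two, after subtraction'' could be simplified.
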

\begin{proof}
We first show (i). Let $\alpha\neq 2$ and $(p_1,p_2)\in\triangle_2$.
Changing the role of $p_1$ and $p_2$ in \eqref{prefo}, by Lemma \ref{prelem}
one obtains
\begin{align}\label{neufo1}
(1-3\cdot 2^{-\alpha})H(p_2,p_1)
=2^{\alpha}\left (2^{-\alpha}H\left(\frac{1}{2},\frac{1}{2}\right)(1-p_1^{\alpha}-p_1^{\alpha})-2^{-2\alpha}H\left(p_1,p_2\right)\right ).
\end{align}
Moreover, one easily sees that \eqref{prefo} transforms to
\begin{align}\label{neufo2}
(1-3\cdot 2^{-\alpha})&H(p_2,p_1)\\
=2^{\alpha}&\left ((1-3\cdot 2^{-\alpha})H\left(\frac{1}{2},\frac{1}{2}\right)(1-p_1^{\alpha}-p_2^{\alpha})-(1-6\cdot 2^{-\alpha}+9\cdot 2^{-2\alpha})\, H(p_1,p_2)\right ).\nonumber
\end{align}
\eqref{neufo1} and \eqref{neufo2} provide
\begin{align*}
(1-2^{2-\alpha})H\left(\frac{1}{2},\frac{1}{2}\right)(1-p_1^{\alpha}-p_2^{\alpha})&=(1-3\cdot 2^{1-\alpha}+2^{3-2\alpha})\, H(p_1,p_2).
\end{align*}
Since $1-3\cdot 2^{1-\alpha}+2^{3-2\alpha}=(1-2^{2-\alpha})(1-2^{1-\alpha})$, it follows
\begin{align*}
H(p_1,p_2)=\frac{1-p_1^{\alpha}-p_2^{\alpha}}{1-2^{1-\alpha}}\ H\left(\frac{1}{2},\frac{1}{2}\right).
\end{align*}
In order to show (ii), let $\alpha=2$ and define maps
$f: [\frac{1}{2},1]\rightarrow [\frac{1}{2},1]$ and $D: [\frac{1}{2},1]\rightarrow [0,\infty[$ by
\begin{equation*}
f(p)=\max\left\{\frac{1-p}{p},1-\frac{1-p}{p}\right\}
\end{equation*}
and
\begin{equation*}
D(p)=|H(p,1-p)-H(1-p,p)|
\end{equation*}
for $p\in [\frac{1}{2},1]$.

By \eqref{prefo3} in Lemma \ref{prelem}, (a) is equivalent both to (b) and to $D(p)=0$ for all $p\in [\frac{1}{2},1]$.
(c) implies (d) by compactness of $\triangle_2$ and validity of the implications (a) $\Rightarrow$ (c) and (a) $\Rightarrow$ (e) is obvious.

From
\begin{eqnarray*}
H(p,1-p)+p^2H\left(\frac{1-p}{p},1-\frac{1-p}{p}\right)&=&H(1-p,2p-1,1-p)\\&=&H(1-p,p)+p^2H\left(1-\frac{1-p}{p},\frac{1-p}{p}\right)
\end{eqnarray*}
for $p\in\,\left [\frac{1}{2},1\right ]$ one obtains
\begin{equation}\label{dp}
D(p)=p^2 D(f(p))
\end{equation}
and by induction
\begin{equation}\label{prodl}
D(p)=\left (\prod_{k=1}^n f^{\circ k}(p)\right )^{\hspace{-1mm} 2} D(f^{\circ n}(p))
\end{equation}
with $f^{\circ n}(p)=\overbrace{f(f(\ldots (f(
}^{n\mbox{\footnotesize\ times}}p))\ldots )).$

For $p\in\,\left ]\frac{2}{3},1\right [$ it holds $f(p)=2-\frac{1}{p}$, hence $f$ maps the interval $\left ]\frac{2}{3},1\right [$ onto the interval $\left ]\frac{1}{2},1\right [$. Since $p-f(p)=\frac{(p-1)^2}{p}>0$
for all $p\in\,\left ]\frac{2}{3},1\right [$, the following holds:
\begin{eqnarray}\label{existsk}
\mbox{ For each }q\in\,\left]\frac{2}{3},1\right [\mbox{ there exists some }k\in {\mathbb N}\mbox{ with }f^{\circ k}(q)\in\,\left ]\frac{1}{2},\frac{2}{3}\right [.
\end{eqnarray}
Moreover, applying \eqref{dp} to $p=\frac{1}{2}$ yields $0=D(\frac{1}{2}) = \frac{D(1)}{4}$, hence
\begin{eqnarray}\label{d10}
D(1)=0.
\end{eqnarray}
Assuming (d), by use of \eqref{prodl}, \eqref{existsk} and \eqref{d10} one obtains
$D(p)=0$ for all $p\in\,\left [\frac{1}{2},1\right ]$, hence (a). If (e) is valid, then by \eqref{prefo3} in Lemma \ref{prelem}
\begin{eqnarray*}
D(r)\leq \left|4H\left (\frac{1}{2},\frac{1}{2}\right)\right | (1-r^2-(1-r)^2)\leq \left|4H\left (\frac{1}{2},\frac{1}{2}\right)\right |
\end{eqnarray*}
for all $r\in [\frac{1}{2},1]$, providing (d). By the already shown, (a), (b), (c), (d), (e) are equivalent .
\end{proof}	

We are able now to complete the proof of Theorem \ref{mainth}.
Assuming \eqref{qadd}, we first show \eqref{maini} for $\alpha\neq 1,2$, and for $H$ bounded and $\alpha=2$.
This provides statement (i) and, together with Lemma \ref{beginlem} (ii), statement (ii) of Theorem \ref{mainth}. 

Statement \eqref{maini} is valid for all $(p_1,p_2,\ldots ,p_n)\in\triangle_1\cup\triangle_2$ by Lemma \ref{beginlem}. In order to prove it for $n>2$, we use induction. Assuming validity of \eqref{maini} for all $(p_1,p_2,\ldots ,p_n)\in\triangle$ with $n=k$, where $k\in {\mathbb N}\setminus\{1\}$, let $(p_1,p_2,\ldots ,p_k,p_{k+1})\in\triangle$. Choose some
$j\in\{1,2,\ldots ,k\}$ with $p_j+p_{j+1}>0$. Then by \eqref{qadd} and Lemma \ref{beginlem} we have
\begin{align*}
H(p_1,\ldots,p_{j-1},p_j,p_{j+1},p_{j+2},\ldots ,p_{k+1})\hspace*{-5cm}&\\&=H(p_1,\ldots,p_{j-1},p_j+p_{j+1},p_{j+2},\ldots ,p_{k+1})
+(p_j+p_{j+1})^\alpha\, H\left(\frac{p_j}{p_j+p_{j+1}},\frac{p_{j+1}}{p_j+p_{j+1}}\right)\\
&=H\left(\frac{1}{2},\frac{1}{2}\right)\frac{1-\sum_{i=1}^{j-1} p_i^{\alpha}-(p_j+p_{j+1})^{\alpha }-\sum_{i=j+2}^{k+1} p_i^{\alpha}}{1-2^{1-\alpha}}\\
&\hspace{8mm} +H\left(\frac{1}{2},\frac{1}{2}\right)\frac{(p_j+p_{j+1})^\alpha}{1-2^{1-\alpha}}\left(1-\left (\frac{p_j}{p_j+p_{j+1}}\right )^{\hspace*{-1mm}\alpha}-\left (\frac{p_{j+1}}{p_j+p_{j+1}}\right )^{\hspace*{-1mm}\alpha}\right)\\
&=H\left(\frac{1}{2},\frac{1}{2}\right)\frac{1-\sum_{i=1}^{k+1} p_i^{\alpha}}{1-2^{1-\alpha}}.
\end{align*}
So \eqref{maini} holds for all $(p_1,p_2,\ldots ,p_n)\in\triangle$ with $n=k+1$.

In order to see (iii), recall a result of Diderrich \cite{Diderrich1975} stating that $H:\triangle\rightarrow{\mathbb R}$ is a multiple of the Shannon entropy if $H$ is bounded and permutation-invariant on $\triangle_2$ and satisfies
\begin{align*}
H(p_1,p_2,p_3,\ldots ,p_n)=H(p_1+p_2,p_3\ldots ,p_n)+(p_1+p_2)\, H\left(\frac{p_1}{p_1+p_{2}},\frac{p_2}{p_1+p_2}\right)\hspace{-0.5cm}\\
\mbox{for all }(p_1,p_2,\ldots, p_n)\in\triangle;\, n\in {\mathbb N}\setminus\{1\}\mbox{ with }p_1+p_2>0,\hspace*{2.77cm}
\end{align*}
which is weaker than \eqref{qadd} with $\alpha=1$. Since under \eqref{qadd} $H$ is permutation-invariant by Lemma \ref{permut}, Diderrichs axiom are satisfied, and we are done.

\section{Further discussion}
Our discussion suggests that the case $\alpha=2$ is more complicated than the general one.
In order to get some further insights, particularly in the case $\alpha=2$, let us consider consider only rational stochastic vectors.
So in the following let $\triangle^{\mathbb Q}=\bigcup_{n\in {\mathbb N}}\triangle_n^{\mathbb Q}$ with $\triangle_n^{\mathbb Q}=\triangle_n\cap {\mathbb Q}^n$ for $n\in {\mathbb N}$ and ${\mathbb Q}$ being the rationals. The following proposition states that for $\alpha\neq 1$ the `rational' generalized Shannon additivity principally provides the Tsallis entropy on the rationals, which particularly provides a proof of the implication (c) $\Rightarrow$ (a) in Theorem \ref{mainth} (ii).
\begin{pro}
Let $H: \triangle^{\mathbb Q}\rightarrow {\mathbb R}$ be given with \eqref{ks4} for $\triangle^{\mathbb Q}$ instead of $\triangle$ and $\alpha\in {\mathbb R}^+\setminus\{0,1\}$. Then it holds
\begin{align}\label{mainrat1}
H(p_1,p_2,\ldots ,p_n)=H\left(\frac{1}{2},\frac{1}{2}\right)\frac{1-\sum_{i=1}^n p_i^{\alpha}}{1-2^{1-\alpha}}\mbox{ for all }(p_1,p_2,\ldots ,p_n)\in\triangle^{\mathbb Q}.
\end{align}
\end{pro}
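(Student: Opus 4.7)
My approach exploits two features of the hypothesis that go beyond what was available for Lemma \ref{beginlem}: the axiom here is the full generalized Shannon additivity \eqref{ks4v} rather than its pairwise weakening \eqref{qadd}, and rationality allows every two-coordinate pair to arise by grouping a uniform vector. Together these bypass the anti-symmetric freedom that forced the auxiliary hypotheses at $\alpha=2$ in the real case. I first reduce to uniform vectors, then obtain the formula on $\triangle_2^{\mathbb Q}$ in one step, and finally extend by the induction used at the end of the proof of Theorem \ref{mainth}.

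Set $g(n) := H(1/n,\ldots,1/n)$. Splitting each coordinate of the uniform $n$-vector into $m$ equal pieces, \eqref{ks4v} yields the functional equation
\begin{equation*}
g(nm) = g(n) + n^{1-\alpha} g(m), \qquad n,m \in \mathbb{N}.
\end{equation*}
Exchanging the roles of $m$ and $n$ and subtracting gives $g(n)(1-m^{1-\alpha}) = g(m)(1-n^{1-\alpha})$, so since $\alpha\neq 1$ the ratio $g(n)/(1-n^{1-\alpha})$ is constant for $n\geq 2$. Evaluation at $n=2$ then produces
\begin{equation*}
g(n) = H\!\left(\tfrac{1}{2},\tfrac{1}{2}\right)\frac{1-n^{1-\alpha}}{1-2^{1-\alpha}}.
\end{equation*}

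For $(p_1,p_2)=(a/N,b/N)\in\triangle_2^{\mathbb Q}$ with $a+b=N$, I view the uniform $N$-vector as split into one block of $a$ and one of $b$ equal coordinates, so \eqref{ks4v} becomes
\begin{equation*}
g(N) = H\!\left(\tfrac{a}{N},\tfrac{b}{N}\right) + (a/N)^\alpha g(a) + (b/N)^\alpha g(b).
\end{equation*}
Substituting the closed form for $g$ and using $(a/N)^\alpha a^{1-\alpha} = a/N^\alpha$ together with $a+b=N$, the $N^{1-\alpha}$ contributions cancel and the stated formula on $\triangle_2^{\mathbb Q}$ drops out; the boundary cases where a coordinate is zero are dealt with exactly as at the start of the proof of Lemma \ref{prelem}. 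The extension to $\triangle_n^{\mathbb Q}$ for $n\geq 3$ is then the verbatim induction already run in the proof of Theorem \ref{mainth}: merge $p_j$ and $p_{j+1}$ via \eqref{ks4v}, invoke the inductive hypothesis and the two-coordinate formula, and simplify.

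I expect the only non-routine step to be the functional-equation analysis for $g$; everything after is algebra or a repetition of earlier arguments. The decisive point is that, in contrast to the real-valued Lemma \ref{beginlem}(ii), the rational setting lets us reach each pair directly from a uniform vector in a single application of the full additivity axiom, without ever needing the asymmetric information that forced the continuity or boundedness assumption at $\alpha=2$.
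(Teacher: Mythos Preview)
Your argument is correct and shares its backbone with the paper's proof: both first solve the functional equation $g(mn)=g(n)+n^{1-\alpha}g(m)$ to obtain $g(n)=H(\tfrac12,\tfrac12)\,(1-n^{1-\alpha})/(1-2^{1-\alpha})$, and both then exploit the \emph{full} additivity axiom to realise a rational vector as a grouping of a uniform one. The only real difference is organisational. The paper handles an arbitrary rational vector $(a_1/b,\ldots,a_n/b)$ in a single stroke, splitting each $p_i$ into enough equal pieces to reach a uniform vector and reading off $H(p_1,\ldots,p_n)$ directly from the resulting identity; your route instead specialises that same trick to $n=2$ and then recycles the induction from Theorem~\ref{mainth} to climb to higher $n$. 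Your approach buys nothing mathematically---the $n=2$ computation you carry out works verbatim for general $n$, so the induction is an unnecessary detour---but it is not wrong, and it makes the connection to the earlier arguments explicit. One small point to make precise in your write-up: the merge step in the induction is really \eqref{qadd}, which follows from \eqref{ks4v} only after you know $H(1)=0$; you have this implicitly via the reference to Lemma~\ref{prelem}, but it is worth saying.
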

\begin{proof}
For the vectors $\left(\frac{1}{m}, \ldots, \frac{1}{m} \right), \left(\frac{1}{n}, \ldots, \frac{1}{n} \right)\in\triangle $ with $m,n \in {\mathbb N}$, we get from axiom \eqref{ks4}
\begin{align*}
H\left(\frac{1}{m},..., \frac{1}{m}\right)+m\left(\frac{1}{m}\right)^{\alpha}H\left(\frac{1}{n},..., \frac{1}{n}\right)&=H\left(\frac{1}{mn},..., \frac{1}{mn}\right)\\
&=H\left(\frac{1}{n},..., \frac{1}{n}\right)+n\left(\frac{1}{n}\right)^{\alpha}H\left(\frac{1}{m},..., \frac{1}{m}\right),
\end{align*}
implying
\begin{align}\label{proofRat1}
H\left(\frac{1}{m},..., \frac{1}{m}\right) = H\left(\frac{1}{n},..., \frac{1}{n}\right) \cdot \frac{1-\left(\frac{1}{m}\right)^{\alpha-1}}{1-\left(\frac{1}{n}\right)^{\alpha-1}}.
\end{align}
Now consider any rational vector $(p_1,p_2,\ldots ,p_n)\in \triangle^{\mathbb Q}$ with $p_1=\frac{a_1}{b}, p_2=\frac{a_2}{b},..., p_n=\frac{a_n}{b}$ for $b,a_1,\ldots ,a_n\in {\mathbb N}$ satisfying $\sum\limits_{i=1}^{n}a_i = b$. With \eqref{ks4} we get
\begin{align*}
&H(p_1,...,p_n) + \sum\limits_{i=1}^{n}p_i^\alpha\cdot H\left(\frac{1}{a_i\cdot n},..., \frac{1}{a_i\cdot n}\right) \\
&\quad\qquad = H\left(\frac{1}{b \cdot n},...,\frac{1}{b \cdot n}\right)
= H\left(\frac{1}{n},..., \frac{1}{ n}\right) + n\cdot \left(\frac{1}{n}\right)^\alpha \cdot H\left(\frac{1}{b},..., \frac{1}{b}\right).
\end{align*}
Using \eqref{proofRat1}, we obtain
\begin{align*}
H(p_1,\ldots,p_n)
&= H\left(\frac{1}{n},\ldots, \frac{1}{ n}\right)\! \cdot \! \left( 1
+ n\cdot\left(\frac{1}{n}\right)^\alpha \cdot\frac{1-\left(\frac{1}{b}\right)^{\alpha-1}}{1-\left(\frac{1}{n}\right)^{\alpha-1}}
- \sum\limits_{i=1}^{n} p_i^\alpha
\frac{1-\left(\frac{1}{a_i\cdot n}\right)^{\alpha-1}} {1-\left(\frac{1}{n}\right)^{\alpha-1}}
\right) \\
&= H\left(\frac{1}{n},\ldots, \frac{1}{ n}\right) \cdot
\frac{1 - \sum\limits_{i=1}^{n}p_i^\alpha}
{1-\left(\frac{1}{n}\right)^{\alpha-1}} = H\left(\frac{1}{2},\frac{1}{2}\right) \cdot
\frac{1 - \sum\limits_{i=1}^{n}p_i^\alpha}
{1-2^{1-\alpha}}.\qedhere
\end{align*}
\end{proof}
Let us finally compare (ii) and (iii) in Theorem \ref{mainth} and ask for the role of (c), (d) and (e) of (ii) in (iii). Symmetry is already given by Lemma \ref{permut} when only \eqref{ks4} is satisfied, \eqref{ks4} and nonnegativity are not sufficient for characterizing Shannon entropy, as shown in \cite{DaroczyMaksa79}. By our knowledge, there is no proof that \eqref{ks4} and continuity are enough, but \eqref{ks4} and analyticity is working. Showing the latter, in \cite{NambiarETAl1992} an argumentation reducing everything to the rationals as above has been used. 

We want to resume with the open problem whether the further assumptions for $\alpha=2$ in Theorem \ref{mainth} are necessary.
\begin{problem}
Is \eqref{maini} in Theorem \ref{mainth} also valid for $\alpha=2$?
\end{problem}

\end{document}